\newcommand{\keywords}[1]{\par\addvspace\baselineskip
\noindent\keywordname\enspace\ignorespaces#1}
\begin{document}

\title{A Lightweight Identification Protocol for Embedded Devices}
\titlerunning{A Lightweight Identification Protocol for Embedded Devices}

\author{Abdoulaye Mbaye, Abdoul Aziz Ciss and Oumar Niang}
\authorrunning{Abdoulaye Mbaye and Abdoul Aziz Ciss and Oumar Niang}
\institute{Laboratoire de Traitement de l'Information et Systèmes Intelligents,
\\ \'Ecole Polytechnique de Thiès \\
\mailsa \mailsb \mailsc}
\maketitle
\bibliographystyle{plain}

\begin{abstract}
The task of this paper is to introduce a new lightweight identification protocol based on biometric data and elliptic curves. In fact, we combine biometric data and asymetric cryptography, namely elliptic curves and standard tools to design a multifactor identification protocol. Our scheme is light,  very fast, secure and robust against all the known attacks on identification protocol. Therefore, one can use it in any constraint device such as embedded systems.
\keywords{Elliptic curves cryptography, biometric data, identification protocol, challenge-response, privacy, zero-knowledge}.
\end{abstract}

\section{Introduction}
In the era of ubiquitous computing, the need to know the identity and authenticity of the entity with which we communicate and ensure the privacy of users has become essential. This new era has brought a new cryptography for low-resource devices such as sensor networks, RFID tags, smartphone, smart cards \ldots . This  is called  'lightweight cryptography'. Design an effective protocol that meets all the security requirements of small devices with limited resources is a challenge.
According to \cite{mobahat2010authentication}, it  is possible but difficult to find a compromise between performance and security, performance and cost or cost and security. Thus, a large number of researchers in recent years that engage in the search for solutions Security and privacy in these smart devices \cite{saarinen2012all}.

Much of the research on the safety of these devices focus more on identification protocols \cite{batina2012hierarchical,vajda2003lightweight}.
For instance, when several of these heterogeneous objects communicate with each other or with other computer systems, they do not have a direct view of their interlocutors. Thus, to identify and recognize the different actors in a communication it is important to identify them.

Nevertheless, the user's identity becomes a key factor of a system, it is necessary to ensure the legitimacy of the claimed identity, in order to avoid situations such as identity theft or stealing identity and non-repudiation. We must not only identify the entity, but also authenticate, ensure safely of his identity.

Indeed, in \cite{vajda2003lightweight} the authors propose a set of burglar extremely lightweight authentication schemes such as challenge-response using logic and basic arithmetic operations, which can be used in RFID systems to authenticate tags. Some protocols of the latter were found to be weak and cryptanalyzed  \cite{saarinen2012all,defend2007cryptanalysis}.

In RFID systems, sensor networks, smart cards, \ ldots, most of the protocols used for identification using standard cryptographic primitives that are most often based on the random oracle model \cite{cho2011securing,kitsos2008rfid,franklin2008cryptology,guo2011rfid,chien2009study,DavidRodrigoJavier,MashatanStinson2008,MashatanStinson2007,poschmann:lightweight}.
With the limitations of these devices, the solid support of cryptographic primitives is not optimal. The use of standard cryptographic hash functions is beyond the capabilities of these devices. Therefore, there is a strong need for new lightweight cryptographic primitives that can be supported by these tools to limited resources such as deterministic randomness extractors on elliptic curves \cite{bouillaguet2011etudes,ciss2014two,ciss2011randomness}.

The Schnorr's identification protocol  is used in environments with limited resources since it offers a good level of security while respecting the constraints posed by these technologies \cite{schnorr1990efficient}. The security of this protocol is based on the difficulty of extracting the discrete logarithm in suitable finite fields.

In this paper, we propose a new lightweight identification protocol for limited-resources devices. In fact, our protocol is designed for  any elliptic curve defined over the finite field $K$, where $K$ can be a prime or a binary field. However, we used the Edwards elliptic curve to gain complexity and contract side channel attacks \cite{hisil2008twisted,bernstein2008twisted}. We also use deterministic encodings to map random elements of the base field to a point of the curve  \cite{tibouchi2014elligator} and also deterministic randomness extractors to derive suitably random bit-string from a a random point of the elliptic curve  \cite{bouillaguet2011etudes,ciss2014two,ciss2011randomness}. Our protocol is secure under the standard model assuming the discrete logarithm problem is hard.
The rest of the paper is organized as follows

\section{Preliminaries}
\subsection{Elliptic curves}
\subsubsection{Definition}

An elliptic curve $E$ over a field $K$ can be described as the subset of $K\times K$ satisfying the equation $$y^2+a_1xy+a_3y=x^3+a_2x^2+a_4x+a_6$$
for a given $a_1,a_2,a_3,a_4,a_5,a_6\in K$, along with another special point "at infinity'' $\mathcal{O}$. An additional demand is that the curve be "smooth'',
which means that the partial derivatives of the curve has no common zeros. This can be reduced to checking that some invariant value $\Delta$ (the \emph{discriminant} of the curve)
which is calculated from the coefficients is not zero.

\begin{figure}
   \centering
   \includegraphics[scale = 0.5]{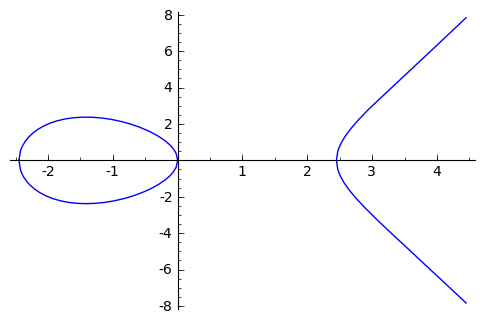}
   \caption{An elliptic curve (corresponding to the equation $y^2=x^3-6x$)}
\end{figure}

Depending on the characteristic of $K$, the above equation can be simplified. There are three cases to consider:
\begin{enumerate}
  \item When $\mathrm{char}(K)\ne 2,3$ the equation can be simplified to $y^2=x^3+ax+b$ with $a,b\in K$.
  \item When $\mathrm{char}(K) = 2$ and $a_1 \ne 0$, the equation can be simplified to $y^2+xy=x^3+ax^2+b$ with $a,b\in K$. This curve is said to be \emph{non-supersingular}. If $a_1 = 0$, the equation can be simplified to $y^2+cy=x^3+ax+b$ with $a,b,c\in K$. This curve is said to be \emph{supersingular}.
  \item When $\mathrm{char}(K) = 3$ and $a_1^2\ne -a_2$, the equation can be simplified to $y^2=x^3+ax^2+b$ with $a,b\in K$. This curve is said to be \emph{non-supersingular}. If $a_1^2=-a_2$, the equation can be simplified to $y^2=x^3+ax+b$ with $a,b\in K$. This curve is said to be \emph{supersingular}.
\end{enumerate}

\subsubsection{Group Structure.}
The structure of the group of an elliptic curve highly depends on the underlying field. Over a finite field $K=\mathbb{F}_q$ ($q$ a power of a prime), the order of the curve is, of course, finite. However, it is not easy to know in general what is the exact number of points on the curve.  A reasonable estimate is given by Hasse's theorem, which shows that the order is roughly $q+1$, with a square-root bound on the error. More precisely, if $\#E(\mathbb{F}_q)$ is the number of points on the curve $E$ over $\mathbb{F}_q$, then $|\#E(\mathbb{F}_q)-(q+1)|\le 2\sqrt{q}$

Knowing the order of the curve is very important when using curves for cryptographic purposes. There exist algorithms for computing the order, but
they are quite complicated and not very efficient. Thus, sometimes a different approach is used - first the order is chosen, and then a suitable curve
for that order is generated.
\\~~\\
\textbf{Addition Formulas - The Prime Field Case.}
We consider now the case $K=\mathbb{F}_p$ for $p>3$ prime (however, for any $K$ such that $char K\ne 2,3$ our description is essentially the same).
In this case let $E$ be a curve defined by $y^2=x^3+ax+b$. Given two points on the curve $P=(x_1,y_1),Q=(x_2,y_2)$ we first consider the case $x_1\ne x_2$
(and so $P\ne\pm Q$). The line passing both in $P$ and $Q$ has slope $\frac{y_2-y_1}{x_2-x_1}$ and so its equation is $y=y_1+\left(\frac{y_2-y_1}{x_2-x_1}\right)(x-x_1)$.

Substituting $y$ into the curve's equation yields a complicated equation of degree 3 in $x$; however, since we already know of two roots to the equation $x_1,x_2$ we can find the third one $x_3$ by using the fact that $-(x_1+x_2+x_3)$ equals the coefficient of $x^2$ in the equation, which is exactly the square of the slope, $\left(\frac{y_2-y_1}{x_2-x_1}\right)$ Hence we have $$x_3=\left(\frac{y_2-y_1}{x_2-x_1}\right)^2-x_1-x_2$$

Given $x_3$, finding $y_3$ is easy, and we see that $$y_3=\left(\frac{y_2-y_1}{x_2-x_1}\right)(x_1-x_3)-y_1$$

When adding a point $P=(x_1,y_1)$ to itself the only difference is the calculation of the slope, which in this case is the tangent to the curve at $x_1$, hence
the implicit derivative $\frac{3x_1^2+a}{2y_1}$. This yields the equations

$$x_3=\left(\frac{3x_1^2+a}{2y_1}\right)^2-x_1-x_2$$

and

$$y_3=\left(\frac{3x_1^2+a}{2y_1}\right)(x_1-x_3)-y_1$$

~~\\
\textbf{Addition Formulas - The Binary Field Case.}
In the binary case $K=\mathbb{F}_2$, the curve $E$ is defined by $y^2+xy=x^3+ax^2+b$. First note that the negative of a point $S=(x,y)$ is defined as $-S=(x,x+y)$, and it can be checked that the point $-P$ indeed lies on $E$.
For two points $P=(x_1,y_1),Q=(x_2,y_2)$ (denote $P+Q=(x_3,y_3)$), the easiest cases are $P=\infty$ or $Q=\infty$. By the group law, it holds that $P+\infty=P$ and $\infty+Q=Q$. In the case of $P=-Q$, the result is $P+Q=\infty$.

The remaining two cases are $P=Q$ and $P\ne Q$.

If $P=Q$, we need to double the coordinate, where $\lambda=x_1+\frac{y1}{x1}$ is the slope of the tangent, analogue to the prime case. As resulting x-coordinate one gets $x_3=a+\lambda^2+\lambda$.

If $P\ne Q$, the slope of the connecting line is $\frac{y_1+y_2}{x_1+x_2}$ (note that addition and subtraction are the same in binary field arithmetic). It holds that $x_3=a+\lambda^2+\lambda+x_2+x_1$ (in the case $P=Q$, $x_2+x_1=x_1+x_1=0$).

Finally in both cases the y-coordinate is computed as $y_3=(x_2+x_3)\lambda+x_3+y_2$. The overall algorithm derived from the formula requires 3 multiplications (one is squaring), one multiplicative field inversion, and 9 (or 6 in case of doubling) additions. While the additions are very cheap operations, computing the inverse could be avoided by using projective coordinates as described in \ref{Projective Coordinates}.

\subsubsection{Point Multiplication.}
Scalar point multiplication is a frequently used operation in the cryptosystem we implemented and it is of big importance to implement this efficiently. We want to compute $kP=Q$ for a scalar $k$ and a point $P$. Using only addition, one could compute this as $\underbrace{P+...+P}_{k}$, with a complexity of $O(k)$. A better way is to use the additive equivalent of the \emph{square-and-multiply} algorithm. There, $k$ is processed bit-wise. In each iteration, the result is doubled, and if the bit equals one, $P$ is added. Let $[k_{\log{k}},...,k_1]$ be the bit-representation of $k$, with $k_1$ as the least significant bit:

\begin{enumerate}
 \item $Q=\infty$
 \item for $i=1..\log{k}$:
 \item $Q=2Q$
 \item if $k_i=1$: $Q=Q+P$
\end{enumerate}

The complexity of this algorithm is $O(\log{k})$. There are some optimizations to this algorithm, most notably the replacement of the binary representation of $k$ with its \emph{non-adjacent form} which results in a shorter representation.\\~~\\
Note that for sake of efficiency and the need to avoid  the most known attacks on elliptic curve point multiplication, new family of elliptic curves were introduced, namely Edwards elliptic curves [cite Edwards], Hessian Elliptic curves [cite Hessian], Huff model [cite Huff].

\subsubsection{Randomness extractors.}~~\\
For $q = p$ a prime number $>5$ let's recall the extractor of Chevalier \emph{et al.} in \cite{op}
\begin{definition}
Let $E$  be an elliptic curve defined over a finite field $\mathbb{F}_p$, for a prime $p>2$. Let $G$ be a subgroup of $E(\mathbb{F}_p)$ and let $k$ be a positive integer. Define the function
\begin{align*}
\mathcal{L}_k : G& \longrightarrow \{0, 1\}^k\\
				P& \longmapsto \mathrm{lsb}_k(\mathrm{x}(P))
\end{align*}
\end{definition}
The following lemmas state that $\mathcal{L}_k$ is a deterministic randomness extractor for the elliptic curve $E$

\begin{lemma}
Let $p$ be a $n$-bit prime, $G$ a subgroup of $E(\mathbb{F}_p)$ of cardinal $q$ generated by a point $P_0$, $q$ being an $l$-bit prime, $U_G$ a random variable uniformly distributed in $G$ and $k$ a positive integer. Then
$$\Delta(\mathcal{L}_k(U_G), U_k)\leq 2^{(k+n+\log_2(n))/2 + 3 - l},$$
where $U_k$ is the uniform distribution in $\{0, 1\}^k$.
\end{lemma}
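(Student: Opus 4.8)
The plan is to follow the classical Fourier-analytic route for deterministic extractors: rewrite the statistical distance as an $\ell^2$ quantity, identify it with a sum of squares of exponential sums over $G$, complete those incomplete sums into genuine additive character sums modulo $p$, and finish with the Weil bound for character sums on the curve.

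Concretely, write $f(s)=\Pr[\mathcal{L}_k(U_G)=s]$ for $s\in\{0,1\}^k\cong\mathbb{Z}/2^k\mathbb{Z}$ and $e_m(u)=\exp(2\pi i u/m)$. By Cauchy--Schwarz, $\Delta(\mathcal{L}_k(U_G),U_k)=\tfrac12\sum_s|f(s)-2^{-k}|\le\tfrac{2^{k/2}}{2}\bigl(\sum_s(f(s)-2^{-k})^2\bigr)^{1/2}$, and Plancherel on $\mathbb{Z}/2^k\mathbb{Z}$ gives $\sum_s(f(s)-2^{-k})^2=2^{-k}\sum_{t=1}^{2^k-1}|\widehat f(t)|^2$, where $\widehat f(t)=\mathbb{E}\bigl[e_{2^k}(t\mathcal{L}_k(U_G))\bigr]=\tfrac1q T(t)$ and $T(t)=\sum_{P\in G}e_{2^k}\bigl(t\,\mathrm{lsb}_k(\mathrm{x}(P))\bigr)=\sum_{P\in G}e_{2^k}\bigl(t\,\mathrm{x}(P)\bigr)$, the last equality because $\mathrm{lsb}_k(m)$ and $m$ differ by a multiple of $2^k$. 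Hence $\Delta\le\tfrac{1}{2q}\bigl(\sum_{t=1}^{2^k-1}|T(t)|^2\bigr)^{1/2}$, and everything reduces to estimating $T(t)$ for $t\not\equiv 0\pmod{2^k}$.

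To bound $T(t)$, observe that the frequency $t/2^k$ is not a character of $\mathbb{F}_p$, so I would expand $z\mapsto e_{2^k}(tz)$ over $\{0,\dots,p-1\}$ into the additive characters of $\mathbb{F}_p$, $e_{2^k}(tz)=\sum_{c\in\mathbb{F}_p}\gamma_c\,e_p(cz)$, and invoke the standard sawtooth estimate $\sum_c|\gamma_c|=O(\log p)$ together with $|\gamma_0|\le 2^{k-1}/p$. This gives $|T(t)|\le|\gamma_0|\,q+O(\log p)\cdot\max_{c\neq 0}\bigl|\sum_{P\in G}e_p(c\,\mathrm{x}(P))\bigr|$. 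For the inner sum I would apply the Weil bound on the whole curve, $\bigl|\sum_{P\in E(\mathbb{F}_p)}e_p(c\,\mathrm{x}(P))\bigr|\le 2\sqrt p$ for $c\neq 0$, and descend to the prime-order subgroup $G$ by writing $\mathbf{1}_G$ as the average of the $\#E(\mathbb{F}_p)/q$ characters of $E(\mathbb{F}_p)$ trivial on $G$ and re-applying Weil to each resulting mixed character sum, obtaining $\max_{c\neq 0}\bigl|\sum_{P\in G}e_p(c\,\mathrm{x}(P))\bigr|\le c_0\sqrt p$ for an absolute constant $c_0$. Since $q\le p+1+2\sqrt p$ by Hasse's theorem the term $|\gamma_0|q$ is negligible, so $|T(t)|\le c_1\sqrt p\,\log p$ uniformly in $t\neq 0$.

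Assembling, $\sum_{t=1}^{2^k-1}|T(t)|^2\le(2^k-1)\,c_1^2\,p\,(\log p)^2$, hence $\Delta\le\tfrac{c_1\,2^{k/2}\sqrt p\,\log p}{2q}$, and substituting $p<2^n$, $\log p<n$, $q\ge 2^{l-1}$ and folding $c_1$, $\log 2$ and the numeric factors into the additive constant gives $\Delta\le 2^{(k+n+\log_2 n)/2+3-l}$. I expect the delicate point to be the estimate for $T(t)$: controlling the completion of the incomplete exponential sum so that only a logarithmic loss occurs (this is the source of the $\log_2 n$ term and of part of the constant), and, harder, making the Weil bound work for the subgroup $G$ rather than the whole curve with an \emph{absolute} constant, which requires handling mixed additive--multiplicative character sums on $E$ and checking that the cofactor $\#E(\mathbb{F}_p)/q$ does not degrade the bound. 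The point at infinity, of probability $1/q$ under $U_G$ and without an $x$-coordinate, contributes only a harmless correction that is absorbed into the same constant.
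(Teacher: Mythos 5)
First, note that the paper does not prove this lemma at all: it simply cites Chevalier \emph{et al.}\ ([op]), so your proposal can only be judged against what the stated bound actually requires. Your overall architecture is the right one, and it is the same toolbox the cited source uses: reduce the statistical distance via Cauchy--Schwarz and Plancherel to $\Delta\le\frac{1}{2q}\bigl(\sum_{t\neq 0}|T(t)|^{2}\bigr)^{1/2}$, complete the sums $T(t)$ into additive character sums modulo $p$, and control $\sum_{P\in G}e_{p}(c\,x(P))$ by $O(\sqrt p)$ with an absolute constant via the Bombieri--Weil bound descended to the subgroup (this is Kohel--Shparlinski, and you correctly flag it as the delicate ingredient).

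The genuine gap is in the step $\sum_{t=1}^{2^{k}-1}|T(t)|^{2}\le (2^{k}-1)\max_{t}|T(t)|^{2}$ combined with $\max_{t}|T(t)|\le c_{1}\sqrt p\,\log p$. The $L^{1}$ completion cost $\sum_{c}|\gamma_{c}|=\Theta(\log p)$ is unavoidable for a fixed $t$ (it is a Lebesgue-constant phenomenon), so this route yields $\Delta\lesssim 2^{k/2}\sqrt p\,(\log p)/q\approx n\cdot 2^{(k+n)/2-l}$, whereas the lemma asserts $8\sqrt n\cdot 2^{(k+n)/2-l}$ --- note the $\log_{2}(n)$ sits \emph{inside} the $/2$ in the exponent, i.e.\ the target carries $\sqrt{\log p}$, not $\log p$. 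Your bound is therefore weaker than the statement by a factor of order $\sqrt n$ for every cryptographically relevant $n$, and no tightening of $c_{1}$ repairs this. The same max-over-$t$ step also invalidates your dismissal of the $c=0$ term: $|\gamma_{0}^{(t)}|q$ can be as large as $2^{k-1}$, which is \emph{not} negligible against $\sqrt p\log p$ once $k>n/2+\log_{2}n$, a regime the subsequent corollary explicitly allows when $l$ is close to $n$. Both defects are cured by keeping the average over $t$: use the exact identity $\sum_{t=0}^{2^{k}-1}|T(t)|^{2}=2^{k}\,\#\{(P,Q)\in G^{2}:x(P)\equiv x(Q)\ (\mathrm{mod}\ 2^{k})\}$, complete the single resulting congruence count over $c$ (the kernel $\sum_{\lambda}e_{p}(c2^{k}\lambda)$ has $L^{1}$ norm $O(p\log p)$ over $c$, and $\sum_{t}|\gamma_{0}^{(t)}|^{2}=O(2^{2k}/p^{2})$), which gives $\sum_{t\neq 0}|T(t)|^{2}=O(2^{k}p\log p)+O(q^{2}2^{2k}/p^{2})$ and hence the stated exponent $(k+n+\log_{2}n)/2+3-l$. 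As written, your argument proves a true but strictly weaker inequality, not the one in the lemma.
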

~~\\
\emph{Proof. } See \cite{op}.
\begin{corollary}
Let $e$ be a positive integer and suppose that
$$k \leq 2l -(n+2e+\log_2(n)+6).$$ Then $\mathcal{L}_k$ is a $(U_G, 2^{-e})$-deterministic extractor
\end{corollary}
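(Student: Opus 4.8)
The plan is to derive the corollary directly from the preceding lemma by unwinding the definition of a $(U_G, 2^{-e})$-deterministic extractor and then performing an elementary comparison of exponents. Recall that $\mathcal{L}_k$ is a $(U_G, 2^{-e})$-deterministic extractor precisely when the statistical distance between $\mathcal{L}_k(U_G)$ and the uniform distribution $U_k$ on $\{0,1\}^k$ is at most $2^{-e}$. Hence it suffices to prove that $\Delta(\mathcal{L}_k(U_G), U_k) \le 2^{-e}$.

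First I would invoke the lemma, which gives $\Delta(\mathcal{L}_k(U_G), U_k) \le 2^{(k+n+\log_2(n))/2 + 3 - l}$. Since $x \mapsto 2^x$ is increasing, it is enough to check that the exponent satisfies $(k+n+\log_2(n))/2 + 3 - l \le -e$. Multiplying through by $2$ and rearranging, this inequality is equivalent to $k \le 2l - (n + 2e + \log_2(n) + 6)$, which is exactly the hypothesis. Chaining the two inequalities then yields $\Delta(\mathcal{L}_k(U_G), U_k) \le 2^{-e}$, which is the claim.

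The only point requiring any care — and it is a very mild one — is to observe that the displayed hypothesis on $k$ is not merely sufficient but is literally the rearranged form of the exponent inequality, so that no slack is lost in the reduction. There is essentially no obstacle here: the corollary is a quantitative restatement of the lemma, and the proof is a two-line substitution. If one wishes to be slightly more thorough, one may additionally remark that the bound is only meaningful when its right-hand side is nonnegative, i.e. when $2l \ge n + 2e + \log_2(n) + 6$, so that a nontrivial number of extracted bits $k \ge 0$ is actually available; but this is a comment on applicability rather than a step in the argument.
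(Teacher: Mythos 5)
Your derivation is correct: the hypothesis on $k$ is exactly the rearrangement of the exponent inequality $(k+n+\log_2(n))/2 + 3 - l \le -e$, so the lemma's bound immediately gives $\Delta(\mathcal{L}_k(U_G), U_k) \le 2^{-e}$. The paper states this corollary without proof (deferring to the cited reference), and your two-line substitution is precisely the intended argument.
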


Consider now the finite field $\mathbb{F}_{p^n}$, where $p > 5$ is prime and $n$ is a positive integer. Then $\mathbb{F}_{p^n}$ is a $n$-dimensional vector space over $\mathbb{F}_p$. Let $\{\alpha_1, \alpha_2, \ldots, \alpha_n\}$ be a basis of $\mathbb{F}_{p^n}$ over $\mathbb{F}_p$. That means, every element $x$ of $\mathbb{F}_{p^n}$ can be represented in the form $x=x_1\alpha_1 + x_2\alpha_2 +\ldots +x_n\alpha_n$, where $x_i\in\mathbb{F}_{p^n}$.
Let $E$ be the elliptic curve over $\mathbb{F}_{p^n}$ defined by the Weierstrass equation
$$y^2+(a_1 x+a_3)y=x^3+a_2 x^2+a_4 x+ a_6.$$

The extractor $\mathcal{D}_k$, where $k$ is a positive integer less than $n$,  for a given point $P$ on $E(\mathbb{F}_{p^n})$, outputs the $k$ first $\mathbb{F}_p$-coordinates of the abscissa of the point $P$.
\begin{definition}
Let $G$ be a subgroup of $E(\mathbb{F}_{p^n})$ and $k$ a positive integer less than $n$. Define the function $\mathcal{D}_k$ i
\begin{align*}
\mathcal{D}_k : G & \longrightarrow \mathbb{F}_{p^k}\\P=(x,y) &
\longmapsto (x_1, x_2, \ldots, x_k)
\end{align*}
where $x\in \mathbb{F}_{p^n}$ is represented as $x=x_1\alpha_1 + x_2\alpha_2 +\ldots +x_n\alpha_n$, and $x_i\in\mathbb{F}_{p^n}$.
\end{definition}

\begin{lemma}
Let $E$ be an elliptic curve defined over $\mathbb{F}_{q}$, whit $q = p^n$  and let $G$ be a subgroup of $E(\mathbb{F}_{p^n})$. Let $\mathcal{D}_k$ be the function defined above. Then,
$$\mathrm{Col}(\mathcal{D}_k(U_G)\leq \frac{1}{p^k} + \frac{4\sqrt{q}}{|G|^2}$$
and
$$\Delta(\mathcal{D}_k(U_G), U_{\mathbb{F}_{p^k}})\leq \displaystyle \frac{2\sqrt{p^{n+k}}}{|G|}$$
where $U_G$ is uniformly distributed in $G$ and $U_{\mathbb{F}_{p^k}}$ is the uniform distribution in
$\mathbb{F}_{p^k}$.
\end{lemma}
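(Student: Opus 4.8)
The plan is to exploit a feature of $\mathcal{D}_k$ that the least-significant-bit extractor $\mathcal{L}_k$ of the prime-field case lacks: $\mathcal{D}_k$ is $\mathbb{F}_p$-linear in the abscissa. Writing $x(P)=x_1\alpha_1+\cdots+x_n\alpha_n$, the value $\mathcal{D}_k(P)=(x_1,\dots,x_k)$ is simply the projection of $x(P)$ onto the coordinate subspace spanned by $\alpha_1,\dots,\alpha_k$, so $\mathcal{D}_k(P)=\mathcal{D}_k(Q)$ exactly when $x(P)-x(Q)$ lies in the $\mathbb{F}_p$-subspace $V=\langle\alpha_{k+1},\dots,\alpha_n\rangle$ of $\mathbb{F}_{p^n}$, which has $p^{\,n-k}$ elements. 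First I would rewrite the collision probability as a pair count,
$$\mathrm{Col}\bigl(\mathcal{D}_k(U_G)\bigr)=\frac{1}{|G|^{2}}\,\#\bigl\{(P,Q)\in G\times G:\ x(P)-x(Q)\in V\bigr\}.$$

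Next I would linearise the condition $z\in V$ with additive characters. Let $\psi$ be the canonical additive character of $\mathbb{F}_{p^n}$ and let $W$ be the orthogonal complement of $V$ for the trace form $(a,b)\mapsto\mathrm{Tr}_{\mathbb{F}_{p^n}/\mathbb{F}_p}(ab)$, so that $|W|=p^{k}$ and $\mathbf{1}[z\in V]=\tfrac{1}{p^{k}}\sum_{w\in W}\psi(wz)$. Substituting this into the pair count and peeling off the frequency $w=0$ yields
$$\mathrm{Col}\bigl(\mathcal{D}_k(U_G)\bigr)=\frac{1}{p^{k}}+\frac{1}{p^{k}|G|^{2}}\sum_{\substack{w\in W\\ w\neq 0}}\Bigl|\sum_{P\in G}\psi\bigl(w\,x(P)\bigr)\Bigr|^{2},$$
which already produces the main term $1/p^{k}$ and reduces everything to estimating the incomplete elliptic character sums $S_w=\sum_{P\in G}\psi\bigl(w\,x(P)\bigr)$ for $w\neq 0$.

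The third step is to bound $S_w$ by a Weil-type estimate, as in the extractor analyses of \cite{op,ciss2011randomness,ciss2014two}. For $G=E(\mathbb{F}_{p^n})$ one writes the number of points of $E$ above an abscissa $x$ as $1+\eta\bigl(f(x)\bigr)$, with $\eta$ the quadratic character and $f$ the cubic defining $E$; the constant part gives $\sum_x\psi(wx)=0$ for $w\neq 0$, and the remaining mixed sum $\sum_x\eta\bigl(f(x)\bigr)\psi(wx)$ is $O(\sqrt{q})$ by Weil, uniformly in $w$. Inserting $|S_w|^{2}=O(q)$ into the $p^{k}-1$ nonzero frequencies gives the stated bound on $\mathrm{Col}(\mathcal{D}_k(U_G))$. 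The bound on the statistical distance then follows from the collision bound via the standard Cauchy--Schwarz inequality $\Delta(X,U_S)\le\tfrac12\sqrt{\,|S|\cdot\mathrm{Col}(X)-1\,}$, taken with $S=\mathbb{F}_{p^k}$: because $p^{k}\mathrm{Col}(\mathcal{D}_k(U_G))-1=O\!\bigl(q\,p^{k}/|G|^{2}\bigr)$, one gets $\Delta\bigl(\mathcal{D}_k(U_G),U_{\mathbb{F}_{p^k}}\bigr)\le 2\sqrt{p^{\,n+k}}/|G|$. One could equally run the argument directly on the Fourier expansion $\Delta(X,U_S)\le\tfrac12\bigl(\sum_{\chi\neq 1}|\mathbb{E}\,\chi(X)|^{2}\bigr)^{1/2}$, since every nontrivial character of $\mathbb{F}_{p^k}$ pulls back along $\mathcal{D}_k$ to some $\psi(w\,x(\cdot))$ with $w\in W\setminus\{0\}$; both routes rest on the same character-sum input.

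I expect the real difficulty to lie in that third step when $G$ is a \emph{proper} subgroup of $E(\mathbb{F}_{p^n})$: there the sum $S_w$ is incomplete, the ``$1+\eta(f(x))$'' trick no longer applies directly, and one must either complete the sum over the cosets of $G$ or invoke an incomplete-character-sum bound for elliptic curves. Keeping careful track of the constants through that step, together with the crude bound $p^{k}-1<p^{k}$ used above, is what fixes the precise coefficients $4$ and $2$ in the statement.
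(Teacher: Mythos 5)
The paper does not actually prove this lemma: its ``proof'' is the single line ``See \cite{ciss2}'', so the only meaningful comparison is with the argument in that reference, and that argument is exactly the Fourier-analytic route you sketch. Your first two steps are correct and standard: $\mathcal{D}_k$ is the $\mathbb{F}_p$-linear projection of $x(P)$ onto the span of $\alpha_1,\dots,\alpha_k$, collisions correspond to $x(P)-x(Q)$ lying in the complementary subspace $V$, and expanding the indicator of $V$ over the $p^k$ additive characters in its trace-annihilator $W$ gives
$$\mathrm{Col}\bigl(\mathcal{D}_k(U_G)\bigr)=\frac{1}{p^k}+\frac{1}{p^k|G|^2}\sum_{w\in W\setminus\{0\}}\Bigl|\sum_{P\in G}\psi\bigl(w\,x(P)\bigr)\Bigr|^2 .$$
The passage from the collision bound to the statistical distance via $\Delta(X,U_S)\le\frac12\sqrt{|S|\,\mathrm{Col}(X)-1}$ is likewise the one used in \cite{op,ciss2}.

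The genuine gap is the one you flag yourself and then leave open: bounding $S_w=\sum_{P\in G}\psi(w\,x(P))$ when $G$ is a \emph{proper} subgroup. This is not a corner case --- the lemma is stated for arbitrary subgroups, and in the intended application $G$ is the prime-order subgroup generated by the base point, so the ``$1+\eta(f(x))$'' completion over all abscissae is never available. The missing ingredient is the Kohel--Shparlinski bound on exponential sums over subgroups of elliptic curves: for any nontrivial additive character $\psi$ of $\mathbb{F}_q$ and any subgroup $G\subseteq E(\mathbb{F}_q)$ one has $\bigl|\sum_{P\in G}\psi(w\,x(P))\bigr|\le 2\sqrt{q}$ (up to the contribution of the point at infinity). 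It is proved by completing the sum over $E(\mathbb{F}_q)$ with multiplicative characters of the quotient $E(\mathbb{F}_q)/G$ and then invoking Bombieri's bound for character sums along a curve; this theorem is the entire content of the lemma, and without it your argument only covers $G=E(\mathbb{F}_q)$. One further remark: inserting $|S_w|^2\le 4q$ into the display above yields $\mathrm{Col}\le \frac{1}{p^k}+\frac{4q}{|G|^2}$, which via Cauchy--Schwarz gives $\Delta\le\sqrt{p^{n+k}}/|G|$, consistent with (indeed slightly better than) the stated distance bound; the $4\sqrt{q}/|G|^2$ printed in the collision bound is not consistent with the paper's own statistical-distance bound and appears to be a transcription error, so you should not expect to reproduce that particular coefficient.
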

~~\\
\emph{Proof. } See \cite{ciss2}

\begin{lemma}
Let $p>2$ be a prime and  $E(\mathbb{F}_{p^n})$ be an elliptic curve over $\mathbb{F}_{p^n}$ and  $G \subset E(\mathbb{F}_{p^n})$ be a multiplicative subgroup of order $r$ with $|r|=t$ and $|p|=m$ and let $U_G$ be the uniform distribution in $G$. If  $e>1$ is an integer  and $k>1$ is an integer such that
\begin{equation*}
k\leq \frac{2t - 2e - nm - 4}{m},
\end{equation*}
then $\mathcal{D}_k$ is a $(\mathbb{F}_{p}^k, 2^{-e})$-deterministic randomness extractor over the elliptic curve $E(\mathbb{F}_{p^n})$.
\end{lemma}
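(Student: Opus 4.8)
The plan is to obtain the statement as an immediate corollary of the preceding lemma, which gives $\Delta(\mathcal{D}_k(U_G), U_{\mathbb{F}_{p^k}}) \leq 2\sqrt{p^{n+k}}/|G|$. Unwinding the definition, $\mathcal{D}_k$ being a $(\mathbb{F}_p^k, 2^{-e})$-deterministic randomness extractor means exactly that $\Delta(\mathcal{D}_k(U_G), U_{\mathbb{F}_{p^k}}) \leq 2^{-e}$ (identifying $\mathbb{F}_{p^k}$ with $\mathbb{F}_p^k$ as a set, which is all that matters for statistical distance). So it suffices to show that the hypothesis on $k$ implies $2\sqrt{p^{n+k}}/|G| \leq 2^{-e}$.

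First I would convert the bit-length data into inequalities: from $|p| = m$ we get $p < 2^m$, hence $\sqrt{p^{n+k}} = p^{(n+k)/2} < 2^{m(n+k)/2}$; and from $|G| = r$ with $|r| = t$ we get $|G| = r \geq 2^{t-1}$. Plugging these into the bound from the previous lemma yields
$$\Delta(\mathcal{D}_k(U_G), U_{\mathbb{F}_{p^k}}) < \frac{2\cdot 2^{m(n+k)/2}}{2^{t-1}} = 2^{\,2+m(n+k)/2-t}.$$

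It then remains to check that the hypothesis forces the exponent on the right to be at most $-e$. Indeed, $2 + m(n+k)/2 - t \leq -e$ is equivalent successively to $m(n+k) \leq 2t - 2e - 4$, to $mk \leq 2t - 2e - 4 - nm$, and to $k \leq (2t - 2e - nm - 4)/m$, which is precisely the assumption; since each step is reversible, the hypothesis is genuinely sufficient. Combining with the display above gives $\Delta(\mathcal{D}_k(U_G), U_{\mathbb{F}_{p^k}}) < 2^{-e}$, proving the claim. I would also note in passing that this bound on $k$ is compatible with the standing constraint $k < n$ built into the definition of $\mathcal{D}_k$: since $r$ divides $\#E(\mathbb{F}_{p^n})$ and Hasse's bound gives $\#E(\mathbb{F}_{p^n}) \leq p^n + 1 + 2p^{n/2} < 2^{nm+1}$, we have $t \leq nm + 1$, whence $(2t - 2e - nm - 4)/m < n$.

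There is no substantial difficulty here; the only points needing a little care are the bit-length convention ($|x| = \ell$ meaning $2^{\ell-1} \leq x < 2^{\ell}$, which is the convention under which the additive constant $4$ in the statement comes out exactly right) and keeping the chain of equivalences genuinely reversible, so that the displayed inequality on $k$ is a sufficient condition and not merely a necessary one.
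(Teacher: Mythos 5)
Your derivation is correct, and it is exactly the intended route: the paper itself gives no argument for this lemma (it defers to the cited reference), but the statement is designed as a corollary of the preceding statistical-distance bound $\Delta(\mathcal{D}_k(U_G), U_{\mathbb{F}_{p^k}})\leq 2\sqrt{p^{n+k}}/|G|$, obtained by substituting $p<2^m$ and $|G|=r\geq 2^{t-1}$ and solving the resulting exponent inequality for $k$ — precisely your computation, and the same pattern the paper uses to pass from its first lemma to the corollary about $\mathcal{L}_k$. Your arithmetic, including the provenance of the constant $4$ and the compatibility check $k<n$ via Hasse's bound, checks out.
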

~~\\
\emph{Proof. } See \cite{ciss2}

\subsubsection{Hashing into elliptic curves.}
Many cryptographic schemes based on elliptic curves require efficient hashing of finite field elements into a given elliptic curve
Password based authentication protocols give a context where hashing into elliptic curves is sometimes required. For instance, the SPEKE (Simple Password Key Exchange) [6] and the PAK (Password Authenticated Key Exchange) [4] protocols both require a hash algorithm to map the password into a point of the curve. The Boneh-Franklin identity-based encryption scheme is another context of use of hashing into elliptic curves.
See [3] for a short survey of applications.

Many hash functions into elliptic curves were proposed. In 2009, Icart introduced and studied such hash functions [3]. More precisely, let $E_{a, b}$ be an  elliptic curve defined over the finite field $\mathbb{F}_q$, with $p = \mathrm{char}(\mathbb{F}_q) \geq 5$, given by the Weierstrass equation
$$y^2 = x^3 + ax + b,$$
where $a, b \in \mathbb{F}_q$. Let $E(\mathbb{F}_q)$ denote the set of $\mathbb{F}_q$-rational points  on $E_{a,b}$, including the point at infinity $\mathcal{O}$. \\
For $q \equiv 2 \ \mathrm{mod} \ 3$, Icart [3] proposed the following map
\begin{align*}
f_{a, b} : \mathbb{F}_q^* &\longrightarrow E_{a, b}(\mathbb{F}_q)\\
                        u & \longmapsto (x, y)
\end{align*}
where $$x = \left(v^2 - b - \frac{u^2}{27}\right)^{1/3} + \frac{u^2}{3} \ \ \ \ \ \text{and} \ \ \ \ \ y = ux + v,$$
with $$v = \frac{3a - u^4}{6u},$$
and $f(u) = \mathcal{O}$ if $u=0$.
\\~~\\
Many other encoding maps into (hyper)elliptic curves were proposed, namely for Edwards curves [elligator, square elligator], Hessian Curves  [Cite hash Hess], hyperelliptic curves [indifferent...]
\\~~\\
\textbf{Hashing into Edwards curves : Elligator \small{(Bernstein \emph{et. al} 2013)}}
Let  $E_d$ be an Edwards curve defined over $\mathbb{F}_q$  by $$x^2 + y^2 = 1 + dx^2y^2,$$ with $d \neq 0, 1$ is not a square. Define the quadratic character
$$\chi : \mathbb{F}_q \longrightarrow \mathbb{F}_q, \ \ a\longmapsto a^{(q-1)/2}$$
If $a$ is a non-zero square, then $\chi(a) = 1$; if $a$ is a non-square, then $\chi(a) = -1$; if $a = 0$, then $\chi(a) = 0$
\begin{lemma}
\label{hashed}
Let $q$ be a prime power congruent to $3$ modulo $4$. Let $s$ be a nonzero element of $\mathbb{F}_q$ with $(s^2 - 2)(s^2+2) \neq 0$. Define $c = 2/ s^2$. Then $c(c - 1)(c + 1) \neq 0$. Define $r = c + 1/c$ and $d  = -(c+1)^2/(c-1)^2$. Then, $r\neq 0$, and $d$ is not a square.
The following elements of $\mathbb{F}_q$ are defined for each $t\in \mathbb{F}_q - \{\pm 1\}$ :
$$u = (1-t)/(1 + t),    \ \ v = u^5 + (r^2 - 2)u^3 + u$$
$$X = \chi(v)u, \ \ Y = (\chi(v)v)^{(q + 1)/4} \chi(v)\chi(u^2+1/c^2)$$
$$x = (c-1)aX(1 + X)/Y, \ \ y = (rX - (1 + X^2))/(rX + (1+X^2))$$
Furthermore $x^2 + y^2 = 1 + dx^2y^2$
\end{lemma}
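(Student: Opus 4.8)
\emph{Proof (sketch).} The plan is to treat the lemma as a chain of direct verifications, and I would begin with the side claims. From $c=2/s^2$ one reads off $c\neq 0$ (since $s\neq 0$) and $c\mp 1 = (2\mp s^2)/s^2\neq 0$ (since $(s^2-2)(s^2+2)\neq 0$), which gives $c(c-1)(c+1)\neq 0$. The one structural fact I would invoke is that $q\equiv 3\pmod 4$ makes $-1$ a non-square in $\mathbb{F}_q$: indeed $(q-1)/2$ is then odd, so $\chi(-1)=(-1)^{(q-1)/2}=-1$. Given this, $r=(c^2+1)/c$ cannot vanish, for $c^2=-1$ would display $-1$ as a square; and $d=-\big((c+1)/(c-1)\big)^2$ is $-1$ times a non-zero square, hence a non-square.

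Next I would set up the square-root bookkeeping that makes the formulas for $X$ and $Y$ meaningful. Since $q\equiv 3\pmod 4$, the exponent $(q+1)/4$ is an integer, and $a^{(q+1)/4}$ is a genuine square root of any non-zero square $a$ because $a^{(q+1)/2}=a\cdot a^{(q-1)/2}=a\,\chi(a)=a$. I would then note that $\chi(v)v$ is always a square --- it equals $v$ when $\chi(v)=1$, and $(-1)v$, a product of two non-squares, when $\chi(v)=-1$ --- and that $v\neq 0$ and $u^2+1/c^2\neq 0$ for every admissible $t$: writing $u=(1-t)/(1+t)\neq 0$ and factoring $v=u\big(u^4+(r^2-2)u^2+1\big)=u(u^2+c^2)(u^2+1/c^2)$, vanishing of any factor would again exhibit $-1$ as a square. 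Hence $\chi(v)^2=\chi(u^2+1/c^2)^2=1$, so $X^2=u^2$ and
\begin{equation*}
Y^2=\big((\chi(v)v)^{(q+1)/4}\big)^2\chi(v)^2\chi(u^2+1/c^2)^2=\chi(v)v=\chi(v)\,u\,(u^4+(r^2-2)u^2+1)=X\big(X^4+(r^2-2)X^2+1\big);
\end{equation*}
that is, $(X,Y)$ lies on $Y^2=X(X^2+c^2)(X^2+1/c^2)$. This is the only relation between $X$ and $Y$ that the last step will use.

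Finally I would verify the Edwards equation itself. After excluding the finitely many $t$ for which a denominator vanishes ($Y=0$ never occurs, by the previous step; $rX+(1+X)^2=0$ is a quadratic condition on $u=\pm X$), I would rewrite $x^2+y^2=1+dx^2y^2$ as $x^2(1-dy^2)=1-y^2$, compute $1+y=2rX/(rX+(1+X)^2)$ and $1-y=2(1+X)^2/(rX+(1+X)^2)$ to get $1-y^2=4rX(1+X)^2/(rX+(1+X)^2)^2$, and substitute $x^2=(c-1)^2s^2X^2(1+X)^2/Y^2$ with $s^2=2/c$, $Y^2=X(X^2+c^2)(X^2+1/c^2)$ and $d=-\big((c+1)/(c-1)\big)^2$. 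Clearing denominators and cancelling the common factors, the identity collapses to the polynomial identity
\begin{equation*}
2rc\,(X^2+c^2)(X^2+1/c^2)=(c-1)^2\big(rX+(1+X)^2\big)^2+(c+1)^2\big(rX-(1+X)^2\big)^2,
\end{equation*}
which I would confirm by expanding both sides and using $(c-1)^2+(c+1)^2=2(c^2+1)$, $(c-1)^2-(c+1)^2=-4c$, together with the relations $rc=c^2+1$ and $c^2+1/c^2=r^2-2$ equivalent to $r=c+1/c$: both sides reduce to $2(c^2+1)X^4+2(c^2+1)(r^2-2)X^2+2(c^2+1)$, the odd-degree terms on the right dropping out because $c^2+1-cr=0$. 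I expect the real obstacle to be exactly this last stage --- keeping the quadratic-character normalizations straight so that $X,Y,x,y$ honestly lie in $\mathbb{F}_q$ and the squaring relations hold on the nose, and then pushing the polynomial identity through without a sign slip. The exceptional $t$ where a denominator vanishes are a minor technicality by comparison.
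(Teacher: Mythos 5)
The paper offers no proof of this lemma at all: it is transcribed (imperfectly) from Bernstein--Hamburg--Krasnova--Lange's Elligator paper and is immediately followed by Definition~\ref{defhash}. So there is nothing in the paper to compare against, and your argument is essentially the original one: non-squareness of $-1$ for $q\equiv 3\pmod 4$, the factorization $v=u(u^2+c^2)(u^2+1/c^2)$, the relation $Y^2=\chi(v)v=X(X^2+c^2)(X^2+1/c^2)$, and a final polynomial identity. That identity does check out: with $A=rX$, $B=(1+X)^2$ the right-hand side is $2(c^2+1)(A^2+B^2)-8cAB=2rc\bigl(r^2X^2+(1-X^2)^2\bigr)=2rc\bigl(X^4+(r^2-2)X^2+1\bigr)$, which equals the left-hand side. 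Two caveats. First, your computation silently corrects the statement as printed: it proves the result for $y=(rX-(1+X)^2)/(rX+(1+X)^2)$ and $x=(c-1)sX(1+X)/Y$, whereas the lemma as written has $(1+X^2)$ in place of $(1+X)^2$ and an undefined $a$ in place of $s$; for the literal formulas the identity fails, so the correction should be made explicit. Second, the case $1+X=0$ (where you cancel $(1+X)^2$) needs the one-line separate check $x=0$, $y=1$.

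The genuine gap is the definedness claim. The lemma asserts that $x$ and $y$ are defined for \emph{every} $t\in\mathbb{F}_q-\{\pm 1\}$, but you dismiss the vanishing of the denominator $rX+(1+X)^2$ as a quadratic condition on $u$ to be ``excluded.'' There are no exceptional $t$, and showing this requires an idea you have not supplied. One way to close it using only what you already have: suppose $rX+(1+X)^2=0$. Then $1+X\neq 0$ (otherwise $rX=0$, impossible since $r\neq 0$ and $X=\pm u\neq 0$), so $rX=-(1+X)^2$ gives $\chi(rX)=\chi(-1)=-1$. On the other hand, substituting $rX-(1+X)^2=2rX$ into your polynomial identity yields $2rc(X^2+c^2)(X^2+1/c^2)=4r^2X^2(c+1)^2$, hence $Y^2=X(X^2+c^2)(X^2+1/c^2)=(2r/c)(c+1)^2X^3$, so $(2r/c)X=rs^2X$ is a nonzero square and $\chi(rX)=+1$, a contradiction. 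With that paragraph inserted, the proof is complete.
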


\begin{definition}\label{defhash}
In the situation of Lemma \ref{hashed}, the decoding function for the complete Edwards curve $E : x^2 + y^2 = 1 + dx^2y^2$ is the function $\phi : \mathbb{F}_q \longrightarrow E(\mathbb{F}_q)$ defined as follows : $\phi(\pm1) = (0,1)$; if $t \notin \{\pm 1\}$ then $\phi(t) = (x, y)$.
\end{definition}

\begin{lemma}
In the situation of Definition \ref{defhash}, assume that $q$ is prime, and define $b = \lfloor \log_2(q)\rfloor$. Let $\sigma : \{0,1\}^b\longrightarrow \mathbb{F}_q$ by $\sigma(\tau_0, \tau_1, \ldots, \tau_{b-1}) = \sum_{i}\tau_i 2^i$. Define $S = \sigma^{-1}(\{0,1,2, \ldots, (q-1)/2\})$. Let $\iota$ be the function $\iota : S\longrightarrow E(\mathbb{F}_q)$, $\iota(\tau) = \phi(\sigma(\tau))$. Then $\#S = (q+1)/2$; $\iota$ is an injective map from $S$ to $E(\mathbb{F}_q)$ and $\iota(S) = \phi(\mathbb{F}_q)$.
\end{lemma}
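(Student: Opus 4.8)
\noindent\emph{Proof sketch.}
The plan is to peel off an elementary counting part and then reduce everything to two structural facts about the Elligator map $\phi$ of Definition~\ref{defhash}. For the counting, note that $b=\lfloor\log_2 q\rfloor$ gives $2^{b}\le q<2^{b+1}$, so the $2^{b}$ integers $\sum_i\tau_i2^i$ with $(\tau_0,\dots,\tau_{b-1})\in\{0,1\}^b$ are pairwise distinct and fill $\{0,1,\dots,2^{b}-1\}$; hence $\sigma$ is injective with image $\{0,1,\dots,2^{b}-1\}$. Since $(q-1)/2<q/2<2^{b}$, the interval $\{0,1,\dots,(q-1)/2\}$ lies inside that image, so $S=\sigma^{-1}(\{0,\dots,(q-1)/2\})$ has exactly $(q-1)/2+1=(q+1)/2$ elements and $\sigma(S)=\{0,1,\dots,(q-1)/2\}$. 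Because $\sigma$ is injective, $\iota=\phi\circ\sigma$ is injective on $S$ iff $\phi$ is injective on $\{0,1,\dots,(q-1)/2\}$, and $\iota(S)=\phi(\{0,1,\dots,(q-1)/2\})$; so all three assertions will follow once we establish (a) $\phi(-t)=\phi(t)$ for every $t\in\mathbb{F}_q$, and (b) $\phi(t)=\phi(t')$ forces $t'\in\{t,-t\}$.

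To prove (a) I would argue as follows. For $t=\pm1$ both sides equal $(0,1)$ by Definition~\ref{defhash}. For $t\notin\{\pm1\}$ one has $u(-t)=1/u(t)$, and substituting $u\mapsto 1/u$ into $v=u^{5}+(r^{2}-2)u^{3}+u$ gives $v(-t)=v(t)/u(t)^{6}$; hence $\chi(v)$ is unchanged (it is multiplied by the square $u^{-6}$) and $X(-t)=\chi(v)/u(t)=X(t)^{-1}$. One then checks that the formulas of Lemma~\ref{hashed} are invariant under $X\mapsto X^{-1}$: for $y=(rX-(1+X^{2}))/(rX+(1+X^{2}))$ this is immediate after multiplying numerator and denominator by $X^{2}$, while for the expression defining $x$ it follows from the companion identity $Y(-t)=X(t)^{-3}Y(t)$ for $Y=(\chi(v)v)^{(q+1)/4}\chi(v)\chi(u^{2}+1/c^{2})$, obtained again by the substitution $u\mapsto 1/u$ together with the fact that $\chi(v)v$ is a square (built into the construction of Lemma~\ref{hashed}). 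Consequently every fibre of $\phi$ contains the pair $\{t,-t\}$, and since $\{0,1,\dots,(q-1)/2\}$ meets every such pair we get $\iota(S)=\phi(\{0,1,\dots,(q-1)/2\})=\phi(\mathbb{F}_q)$; this settles the third assertion and reduces the injectivity of $\iota$ to (b).

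To prove (b), which is the heart of the matter, I would invert $\phi$ explicitly. Using $q\equiv 3\bmod 4$, the elements $-1$, $-c^{2}$ and $-1/c^{2}$ are non-squares, so for $t\notin\{\pm1\}$ one gets $v\neq0$ and $X\notin\{0,-1\}$ (the only degenerate value of $X$ being $X=1$, attained solely at $t=0$), and in particular $\phi(t)\neq(0,1)$ there; the cases $t\in\{0,\pm1\}$ are handled by direct inspection. For $t\notin\{0,\pm1\}$, the relation defining $y$ yields $rX/(1+X^{2})=(1+y)/(1-y)$, which determines the unordered pair $\{X,X^{-1}\}$ as the roots of a monic quadratic, and the value of $x$ then pins down $X$. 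From $X$ one has $u^{2}=X^{2}$, hence $t=(1-u)/(1+u)$ lies among the four values $\pm(1-X)/(1+X)$ and $\pm(1+X)/(1-X)$; the quadratic-character constraints implicit in $X=\chi(v)u$ and in the exponent $(q+1)/4$ occurring in $Y$ discard the two spurious candidates and leave exactly $\{t,-t\}$. Hence $\phi(t)=\phi(t')$ implies $t'\in\{t,-t\}$, and since two distinct elements of $\{0,1,\dots,(q-1)/2\}$ are never negatives of each other modulo $q$, $\phi$ — and therefore $\iota$ — is injective on $S$. The step I expect to be the genuine obstacle is precisely this last one: showing that the sign and character choices hard-wired into the Elligator-1 formulas reduce the four algebraic pre-images to the single pair $\{t,-t\}$; this is the correctness analysis of the Elligator-1 encoding, carried out in detail in the original Elligator-1 work. \hfill$\square$
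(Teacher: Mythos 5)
The paper offers no proof of this lemma at all: it is quoted essentially verbatim from Bernstein--Hamburg--Krasnova--Lange's Elligator paper, and the source moves straight on to the next subsection, so there is no in-paper argument to compare yours against. Judged on its own terms, your sketch has the right architecture. The counting step ($2^b\le q<2^{b+1}$, hence $\sigma$ injective with image $\{0,\dots,2^b-1\}\supseteq\{0,\dots,(q-1)/2\}$, hence $\#S=(q+1)/2$) is correct and complete, and the reduction of both remaining claims to the two structural facts (a) $\phi(-t)=\phi(t)$ and (b) $\phi(t)=\phi(t')\Rightarrow t'=\pm t$ is exactly the right decomposition: $\{0,\dots,(q-1)/2\}$ is a transversal for the $t\mapsto -t$ action, so (a) gives $\iota(S)=\phi(\mathbb{F}_q)$ and (b) gives injectivity. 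Your verification of (a) is essentially right: $u(-t)=1/u(t)$, $v(-t)=v(t)/u^6$, $X(-t)=1/X(t)$, and invariance of $y$ under $X\mapsto 1/X$ all check out. The one computation you assert without carrying out is $Y(-t)=Y(t)/X(t)^3$; note that the factor $\chi(u^2+1/c^2)$ does \emph{not} transform trivially under $u\mapsto 1/u$ (it becomes $\chi(c^2+u^2)$ up to squares, not $\chi(u^2c^2+1)$), and it is the residual $\chi(u)^{\pm3}$ coming out of $(\chi(v)v/u^6)^{(q+1)/4}$ that must absorb the discrepancy. That cancellation is precisely why the seemingly gratuitous factor $\chi(u^2+1/c^2)$ appears in the definition of $Y$, and it deserves to be written out rather than waved at.

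The genuine gap is (b). Your outline (recover $\{X,X^{-1}\}$ from $y$ via $rX/(1+X^2)=(1+y)/(1-y)$, pin down $X$ from $x$, then use the character constraints to cut the four candidate preimages $\pm(1-X)/(1+X)$, $\pm(1+X)/(1-X)$ down to $\{t,-t\}$) is the correct strategy and matches the injectivity analysis in the Elligator paper, but as written it is a plan, not a proof: the last elimination step is exactly where all the work lives, and you explicitly defer it to the original source. Since the host paper itself proves nothing here, your submission is strictly more informative than what the paper provides; but if a self-contained proof is required, you still owe the detailed verification of the $Y$ identity in (a) and the character-based elimination in (b).
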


\subsection{Identification protocols}
It is well known that patterns of identification are very important to cryptography and computer security.\\
An authentication protocol is a part of an identification pattern. An identification system is composed of two protocols appointed registration and identification between two parties. In an identification scheme basic, recording ends with both parties sharing a secret key that both need to store safely. In general, authentication protocols are based on one or more of the following factors.
\begin{enumerate}
\item What you are ? Biometrics.
\item What you have ? Smart cards, SIM cards, or similar computers working.
\item What you know ? Passwords, PIN codes, secret keys.
\end{enumerate}
In this section, we focus on purely cryptographic authentication protocols, in which a successful prover must know where to have some secret key. The general objective of authentication protocols in cryptographic constructions is to reduce the computation time for the prover and/or verifier.
These protocols are usually three to four rounds.

The aim of the adversary in an identification scheme is identity theft or to attack the system so that it behaves like a cheater prover and attempts to identify the honest verifier. With the existence of the opponent attempting to impersonate three common attacks are generally considered, passive attacks, active and simultaneous attacks. Therefore, the security against these attacks has become a major concern in cryptography, where the analysis and creation of identification systems are widely studied.

The principle of zero-knowledge is used as the identification scheme. It is used especially in the context of digital signatures, often on smart card. Many protocols exist, and many variations evolving despite the findings of some security vulnerabilities.

Fiat and Shamir \cite {fiat1987prove} have faith first proposed an interactive identification scheme for zero-knowledge, also involving two parts: a prover $P$ and a  verifier $V$, where the prover convince spineless the verifier knowledge of a secret without revealing any information other than the fact that he knows the secret.

Then a lot of identification based on zero-knowledge protocol knowledge have been proposed: we include in particular Feige-Fiat-Shamir (FFS) \cite{feige1988zero} , Schnorr \cite{schnorr1990efficient} and Guillou-Quisquater(GQ) or Ohta-Okamoto \cite{ohta1990modification}. We detail the Schnorr identification on elliptic curves in the following protocol.

The Guillou-Quisquater scheme (QM) is based on the problem of inversion of RSA. Very close to the protocol Schnorr, they show that their scheme is secure against passive attacks, provided the difficulty of factorization.

Ohta and Okamoto (OO) present a modification of the Fiat-Shamir scheme based on the difficulty of extracting the $ i^{th} $ root and they prove their scheme is secure as the Fiat-Shamir scheme.

In \cite{okamoto1993provably}, Okamoto's identification scheme exists in  three version. The first is based on the discrete logarithm problem, the second one on the RSA problem and the last on the factorization problem. All these schemes have been proven secure against the passive attack.

One of the simplest and most frequently used knowledge proofs, proof of knowledge of a discrete logarithm is due to Schnorr. It provides evidence of a zero-knowledge prover is in possession of also known as proof of knowledge given secret. A challenge $ c $ is used to verify the identity of the prover without ever revealing the secret $s$. \\
It may well be used in the environment forced on all devices with limited resources, since it offers a good level of security while respecting the constraints posed by these technologies. \\
In Figure \ref{schnor}, we present the native version of Schnorr identification scheme over an elliptic curve

\begin{figure}[hbtp!]
\label{schnor}
\caption{ECSchnorr identification scheme.}
\centering
\includegraphics[scale=1]{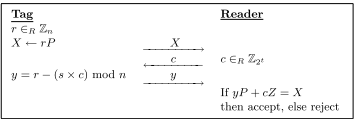}
\end{figure}
Where $n$ denotes the order of the elliptic curve $ s $ as the secret key, $r$ as Escrow, $ c $ as a challenge and $y$ as challenge-response. All points of the elliptic curve are given in capitals where $ P $ is the base point of the curve $X$ is the point of control, and $Z$ is the point that corresponds to the public key. \\ Description Protocol
\begin{enumerate}
\item {Key Generation:} To generate a key pair, the prover sends a
hazard $ s \int_{R}  Z_n$. Its private key is $s$ and the public key is $ Z  = sP$.
\item {Commitment phase: To engage}, the prover sends a random number $r \ int_{R}  Z_n$ and calculates $X = rP$. It then sends $X$ to the verifier.
\item {Challenge phase:} The verifier chooses a random number $ c \int_{R} Z_{2^t} $ and sends the prover.
\item {Response phase:} The prover then computes $ y = r - (s*c) \ \textrm{mod}\ n $, it returns to the verifier.
\item {Verification:} The verifier checks if $X  = yP + cZ$ which case the identification is successful.
\end{enumerate}

The protocol is secure against passive opponents, but it is not protected against active and concurrent attacks.

Many patterns of light identifications were proposed in the literature for use on lightweight cryptographic devices as in \cite{BiblioRFIDGA}.

\section{Our protocol}
This section introduces the main result of the paper. In fact, we propose a new lightweight weak-strong identification protocol secure under the standard model, using elliptic curves and biometric data. The idea is to combine two kind of identification schemes : biometric based authentication and challenge-response identification.

Let $E$ be an elliptic curve defined over the finite field $\mathbb{F}_q$, where $q$ is a prime power. Denote $E(\mathbb{F}_q)$ be the set of $\mathbb{F}_q$-rational points over $E$. Let
\begin{align*}
h : \mathcal{B} &\longrightarrow E(\mathbb{F}_q)\\
                u &\longmapsto Q = (x, y)
\end{align*}
be a hash function into the elliptic curve, where $\mathcal{B}$ denote the set of biometric data which can be viewed as bit-strings. $h$ can be the Icart function for elliptic curves in Weierstrass form or the Elligator when working with Edwards elliptic curves.\\~~\\
Alice, the claimant (the prover) maps her biometric data $b$ (a bit-string consisting the combination of fingerprint and retinal scan or voice scan) to a point $B$ of the elliptic curve $E$ using function $h$. She also select a random integer $\alpha$, computes $s = Ext_k(\alpha B)$, where $Ext_k$ may be the function $\mathcal{L}_k$ or $\mathcal{D}_k$, and sends the couple $(B, s)$ to the verifier (Bob) through asymetric cryptography methods. For instance, $B$ can be encrypt using the public key of the verifier. Therefore, the claimant and the verifier share the same secret $(B,s)$, which is used for weak authentication. Here, $B$ play the role of a password.\\~~\\
\textbf{Setup phase. }The claimant
\begin{enumerate}
\item Chooses a random point $P\in E(\mathbb{F}_q)$ with order $l$ and computes  .
\item Computes $$C = \alpha P + B$$
\item Sends $P$ and $C$ to the verifier
\item Keeps secrete the values $\alpha$
\end{enumerate}
~~\\
\textbf{Identification phase.} Let $\mathcal{C}$ be the claimant and $\mathcal{V}$ be the verifier
\begin{enumerate}
\item $\mathcal{C}$ chooses a random value $r <_{R} l$, computes $D = rP - \alpha B$ and sends $D$ to $\mathcal{V}$
\item $\mathcal{V}$ chooses and sends to $\mathcal{C}$ a random value $e \in\{1, \ldots, 2^{t-1}\}$, where $t$ is the security parameter
\item $\mathcal{C}$ computes $y = r - e\alpha$ and sends $y$ to $\mathcal{V}$
\item Upon receiving $y$, $\mathcal{V}$ verifies if $$Ext_k(yP - D - e(B-C)) = s$$
\end{enumerate}

\begin{theorem}
The verification process is correct.
\end{theorem}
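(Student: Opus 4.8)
The plan is to show that the point $yP - D - e(B-C)$ reconstructed by the verifier in step~4 of the identification phase is literally the same point as $\alpha B$, which is the input the claimant used to define the shared secret $s = Ext_k(\alpha B)$. Once this point identity is in hand, the claimed equation $Ext_k(yP - D - e(B-C)) = s$ is immediate, since $Ext_k$ (instantiated as $\mathcal{L}_k$ or $\mathcal{D}_k$) is a deterministic function of its input point, so equal inputs yield equal outputs. So the whole content of the theorem is a short computation in the group $E(\mathbb{F}_q)$.

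To carry it out I would substitute the values produced by an honest claimant. By step~3 of the identification phase, $y = r - e\alpha$, hence $yP = rP - e\alpha P$, all operations taken in $E(\mathbb{F}_q)$. By step~1 of the identification phase, $D = rP - \alpha B$, and by step~2 of the setup phase, $C = \alpha P + B$. Then
\begin{align*}
yP - D - e(B - C) &= (rP - e\alpha P) - (rP - \alpha B) - eB + eC\\
&= -e\alpha P + \alpha B - eB + e(\alpha P + B)\\
&= \alpha B,
\end{align*}
where in the last line $-e\alpha P$ cancels $e\alpha P$ and $-eB$ cancels $eB$. Thus the verifier recovers exactly $\alpha B$, and applying $Ext_k$ to both sides gives $Ext_k(yP - D - e(B-C)) = Ext_k(\alpha B) = s$, so an honest execution always passes verification, which is what correctness means.

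There is no real obstacle here; the proof is essentially the displayed computation. The only points worth flagging are that the cancellations are valid because $E(\mathbb{F}_q)$ is an abelian group, so the scalar multiples of $P$ and of $B$ recombine as written regardless of whether $B$ happens to lie in $\langle P\rangle$, and that the extractor step relies only on $Ext_k$ being a well-defined deterministic map, not on any of its statistical extraction bounds (those bounds enter the security analysis, not the correctness argument).
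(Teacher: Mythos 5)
Your proposal is correct and follows essentially the same route as the paper: substitute $y = r - e\alpha$, $D = rP - \alpha B$, and $C = \alpha P + B$ into $yP - D - e(B-C)$, cancel in the abelian group to obtain $\alpha B$, and apply the deterministic map $Ext_k$. The only cosmetic difference is that the paper simplifies $B - C = -\alpha P$ first while you distribute $-e(B-C)$ termwise; your added remarks on determinism of $Ext_k$ are sound but not needed beyond what the paper states.
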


\begin{proof}
In fact,
\begin{align*}
yP - D - e(B-C) &= (r-e\alpha)P - (rP - \alpha B) -e(-\alpha P)\\
                &= \alpha B
\end{align*}
Thus, $$Ext_k(yP - D - e(B-C) = Ext_k(\alpha B) = s$$
\end{proof}

\begin{theorem}
Assume the hardness of DLP and the randomness of the biometric data, then the proposed  identification protocol is secure against direct impersonation attacks.

Specifically,
\end{theorem}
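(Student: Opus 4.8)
The plan is to mimic Schnorr's soundness proof: from any adversary $\mathcal{A}$ that, knowing only the public data $(P,C)$, makes the honest verifier accept with non-negligible probability $\epsilon$, I would build an algorithm $\mathcal{B}$ solving the discrete logarithm problem on $E(\mathbb{F}_q)$. On a DLP instance $(P,Y)$ with $Y=\alpha P$ and $P$ of prime order $l$, $\mathcal{B}$ first manufactures a consistent protocol view for the (unknown) secret $\alpha$: invoking the assumed randomness of the biometric data, it samples a known random scalar $\beta$ and sets $B=\beta P$, which is uniform in $\langle P\rangle$ up to the statistical gap between this distribution and that of $h$ applied to a genuine biometric string; it then puts $C=Y+B$, so that implicitly $C=\alpha P+B$, and computes $s=Ext_k(\beta Y)=Ext_k(\alpha B)$. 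Thus $\mathcal{B}$ reproduces exactly the public values $(P,C)$ and the verifier's secret $(B,s)$, and $\mathcal{A}$'s view is the real one up to the one biometric statistical-distance term. (If the attack model allowed prior honest transcripts, $\mathcal{B}$ could still simulate them by the usual honest-verifier trick, choosing $y,e$ first and setting $D=yP+(e-\beta)Y$; for a \emph{direct} impersonation this is not even needed.)

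Next I would run $\mathcal{A}$ against the verification algorithm: after $\mathcal{A}$ sends its commitment $D$, $\mathcal{B}$ plays the verifier and sends a random $e\in\{1,\dots,2^{t-1}\}$; if $\mathcal{A}$ returns an accepting $y$, $\mathcal{B}$ rewinds $\mathcal{A}$ to just after $D$ was sent and supplies a fresh challenge $e'\neq e$, obtaining a second accepting $y'$. By the Reset Lemma this double success occurs with probability of order $\epsilon(\epsilon-2^{-(t-1)})$. Both acceptances give $Ext_k\big(yP-D-e(B-C)\big)=s=Ext_k\big(y'P-D-e'(B-C)\big)$. Writing $T_e=yP-D-e(B-C)$ and using $B-C=-\alpha P=-Y$, we get $T_e=yP-D+eY$ and $T_{e'}=y'P-D+e'Y$. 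Conditioned on $T_e=T_{e'}=\alpha B$, subtracting yields $(y-y')P=(e'-e)Y=(e'-e)\alpha P$; since $0<|e-e'|<2^{t-1}<l$ and $l$ is prime, $e'-e$ is invertible modulo $l$ and $\mathcal{B}$ outputs $\alpha=(y-y')(e'-e)^{-1}\bmod l$, contradicting the hardness of DLP.

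The step that needs genuine care is closing the gap between "the two extractor outputs agree" and "the two extractor inputs equal $\alpha B$": unlike plain Schnorr, the verifier only compares $k$-bit extractor outputs, so the non-injectivity of $Ext_k$ must be controlled. I would argue that, because $s$ (equivalently $\alpha B=\beta Y$) is fixed by $\mathcal{B}$ through $\beta$ independently of the coins $\mathcal{A}$ uses in each rewound branch, the event "$T_e\neq\alpha B$ yet $Ext_k(T_e)=s$'' is a collision with a target value the adversary did not help choose; invoking the fact that $\mathcal{L}_k$ (resp. $\mathcal{D}_k$) is a genuine $(U_G,2^{-e_0})$-deterministic extractor — so that $s$ is $2^{-e_0}$-close to uniform and leaks essentially nothing usable about a second preimage — together with the collision bounds stated in the lemmas above, this event has probability at most a term of the form $2^{-e_0}+2^{-k}$, hence negligible under the parameter constraints of the Corollary and the last two lemmas; the same holds for $T_{e'}$. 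One also checks here that $D$ can be assumed to lie in $\langle P\rangle$ (the verifier may reject points of wrong order), so that the extractor analysis, which is stated over the subgroup $G$, applies.

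I expect this extractor bridge to be the main obstacle: the algebra of the rewinding extraction is routine, but making precise that $\mathcal{A}$ cannot profitably steer the verification point $T_e$ toward some preimage of $s$ other than $\alpha B$ is where the randomness-extractor hypotheses and the randomness of $B$ really do the work, and it forces one to verify that the inequality $k\le 2l-(n+2e_0+\log_2 n+6)$ (and its $\mathbb{F}_{p^n}$ analogue) is compatible with the chosen security parameter $t$. Assembling the biometric statistical-distance term, the extractor term $2^{-e_0}+2^{-k}$, the challenge-collision term $2^{-(t-1)}$, and the quadratic rewinding loss then gives the explicit bound announced by the "Specifically'' clause, namely that $\mathcal{A}$'s impersonation advantage is at most the DLP advantage plus these negligible terms.
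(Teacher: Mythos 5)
Your proposal takes a genuinely different --- and far more substantial --- route than the paper. The paper's own proof is two sentences long: it observes that recovering the secret $\alpha$ from the public value $C=\alpha P+B$ would require knowing $B$ and solving a discrete logarithm, and stops there. That argument only addresses key recovery; it says nothing about an impersonator who convinces the verifier without ever learning $\alpha$, and it supplies none of the quantitative bound that the theorem's trailing ``Specifically'' promises. You instead give the standard Schnorr-style soundness reduction: simulate the setup from a DLP instance $(P,Y)$ by setting $B=\beta P$, $C=Y+B$, $s=Ext_k(\beta Y)$ (a nice trick that lets the reduction know the verifier's secret $s$ without knowing $\alpha$), rewind the adversary on two challenges, and extract $\alpha=(y-y')(e'-e)^{-1}\bmod l$. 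This proves the statement the theorem actually needs --- that successful impersonation, not merely key recovery, reduces to DLP --- and is strictly stronger than what the paper argues.

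Two caveats on your version. First, the extractor bridge you flag is indeed the crux, and it is not closed by the lemmas the paper provides: those bound the distance of $Ext_k(U_G)$ from uniform when $U_G$ is uniform on $G$, whereas you need that $s=Ext_k(\alpha B)$ remains close to uniform \emph{conditioned on the adversary's view}, which includes $C=(\alpha+\beta)P$; under that conditioning the point $\alpha B=\alpha(\sigma-\alpha)P$ is no longer uniform on $G$, so a separate min-entropy argument is required before you may charge roughly $2^{-k}$ per branch to the event that $T_e\neq\alpha B$ yet $Ext_k(T_e)=s$. Second, your ``biometric statistical gap'' term is not small as stated --- the image of Elligator covers only about half of $E(\mathbb{F}_q)$, so a uniform $B\in\langle P\rangle$ is at constant statistical distance from $h$ applied to a biometric string --- but this turns out to be harmless in the direct-impersonation game, because $B$ never reaches the adversary and cancels out of $B-C=-\alpha P$; it influences the view only through the $k$-bit target $s$. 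With those two points patched, your reduction is sound and supersedes the paper's argument, which should be regarded as a heuristic rather than a proof.
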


\begin{proof}
The goal of the adversary is to impersonate the prover. To achieve this task, he has to solve the equation $y = r - eb$, where $b$ is the unknown. The value of $b$ can be derived from the public parameter $C = \alpha P + B$, ie. $\alpha P = C-B$ if the adversary knows the value of $B$ and can solve also the DLP over $E$ at the same time. In other words, as far as the DLP is hard, our proposition is secure.
\end{proof}

\section*{Conclusion}
Based on the difficulty of the discrete logarithm and the rarity of an authentication protocol based on the standard model for lightweight cryptography, we proposed a zero-knowledge protocol identification by combining for example biometrics which provides better efficiency. As proposed in Section 3, this protocol uses a hash function or Elligator, Icart and randomness extractor. This allows to increase the performance, the safety and the cost in the identification. We also showed that our protocol is secure against impersonation of active attacks. Therefore, the proposed identification system is more desirable than the existing systems for these light technology.
\bibliography{biblio}

\end{document}